    \numberwithin{equation}{section}
    \numberwithin{figure}{section}
    \DeclareMathAlphabet{\mathcalligra}{T1}{calligra}{m}{n}
        \theoremstyle{definition}
    \newtheorem{theorem}{Theorem}
    \newtheorem{conjecture}{Conjecture}
    \newtheorem{corollary}{Corollary}
    \newtheorem{definition}{Definition} 
    \DeclareMathAlphabet{\mathcalligra}{T1}{calligra}{m}{n}
    \date{}
    \long\def\ch#1{#1}            
    \long\def\info#1{}            
\begin{document}
\title {Complexity of Stability in Trading Networks\thanks{
    We would like to thank Peter Bir\'{o}, John Hatfield, Ravi Jagadeesan, Scott Kominers, Mike Ostrovsky, and Alex Westkamp for fruitful conversations about this project. 
    Fleiner and Jank\'{o} were supported by the Hungarian Scientific Research Fund (OTKA grant K-108383) as well as the
    MTA-ELTE Egerv\'ary Research Group. Jank\'{o} also acknowledges the support of the MTA-BCE Strategic Interaction Research Group. Part
    of the research was carried out during two working visits at Keio
    University. Schlotter was also supported by the Hungarian Scientific Research Fund (OTKA grant K-108947). This work was supported by the Economic and Social Research Council grant number ES/R007470/1. Teytelboym is grateful for the hospitality of the Budapest University of Technology and Economics in August 2016. 
}}

\author{
Tam\'{a}s Fleiner\thanks{{Budapest University of
Technology and Economics. E-mail: {\tt fleiner@cs.bme.hu}}.}
\and 
Zsuzsanna Jank\'o\thanks{{University of Hamburg. E-mail: {\tt
    zsuzsanna.janko@uni-hamburg.de}}.}
\and
Ildik\'{o} Schlotter\thanks{{Budapest University of Technology and Economics. E-mail: {\tt
ildi@cs.bme.hu}}.}
\and
Alexander Teytelboym\thanks{{University of Oxford. Email: {\tt alexander.teytelboym@economics.ox.ac.uk}}.}
}
\date{\today}
\maketitle
\singlespacing
\begin{abstract}
    Efficient computability is an important property of solution concepts in matching markets. We consider the computational complexity of finding and verifying various solution concepts in trading networks---multi-sided matching markets with bilateral contracts---under the assumption of full substitutability of agents' preferences. It is known that outcomes that satisfy \textit{trail stability} always exist and can be found in linear time. Here we consider a slightly stronger solution concept in which agents can simultaneously offer an upstream and a downstream contract. We show that deciding the existence of outcomes satisfying this solution concept is an $\mathsf{NP}$-complete problem even in a special (flow network) case of our model. It follows that the existence of \textit{stable} outcomes---immune to deviations by arbitrary sets of agents---is also an $\mathsf{NP}$-hard problem in trading networks (and in flow networks). Finally, we show that even verifying whether a given outcome is stable is $\mathsf{NP}$-complete in trading networks.
    \end{abstract}
    
    \noindent \begin{flushleft}
\textbf{Keywords}: matching markets, market design, computational complexity, trading networks, flow networks, supply chains, trail stability, weak trail stability, chain stability, stability, contracts.
\par\end{flushleft}{\small \par}

\noindent \begin{flushleft}
\textbf{JEL Classification}\textsc{:}\textit{ C78, L14}
\par\end{flushleft}{\small \par}

    \onehalfspacing
    \section{Introduction}

    One of the most important features of the marriage market or the college admissions problems is that (pairwise) stable outcomes can always be found efficiently using the celebrated Deferred Acceptance algorithm \citep{GaSh:62}. Efficient computation is key for practical applications of the Deferred Acceptance algorithm and its variants \citep{Roth:1984a,AbSo:03,SoSw:13,HaKo:14}. Moreover, since in standard (one-to-one) marriage markets or (many-to-one) college admissions problems (pairwise) stable outcomes coincide with the core, there is no obvious need to select among various solution concepts.
    
    However, in more complex, many-to-many matching markets there are many different, yet economically natural and interpretable, solution concepts \citep{Blai:88,Soto:99,EcOv:06,KlWa:09}. For example, pairwise stable outcomes do not coincide with the core \citep{Blai:88}. What makes a good solution concept? At the very least, it should make falsifiable predictions. But in economics good solution concepts often derive their appeal for normative reasons: they have sensible properties and make intuitive sense in particular settings. Efficient computation can serve as one such desirable property. As \citet[pp. 29-30]{Papa:07} argues:
    \begin{quote}
    The reason is simple: If an equilibrium concept is not efficiently computable, much of its credibility as a prediction of the behavior of rational agents is lost -- after all, there is no clear reason why a group of agents cannot be simulated by a machine. Efficient computability is an important modelling prerequisite for solution concepts.
    \end{quote} 
If agents cannot efficiently find certain deviations from an outcome in a matching market, then a stability concept that is based on these deviations may be too strong. On the other hand, if it is computationally hard to verify whether an outcome satisfying a particular solution concept exists then this limits the applicability of this solution concept for matching market design.

    In this paper, we consider the computational complexity of various solution concepts in \emph{trading networks}. A trading network is a matching market in which heterogeneous firms can sign many contracts that specify the terms of the trade, quality of the product, price etc.\ with their suppliers (upstream) and with their buyers (downstream). Following a seminal contribution by \citet{Ostr:08}, we assume that agents' preferences satisfy \emph{full substitutability}, that is, upstream and downstream contracts are complements, but contracts on the same side are substitutes. Our model can capture not only wealth effects, but also distortionary frictions, such as sales taxes or bargaining costs \citep*{FlJaJaTe:17}. Finding a solution concept with attractive computational as well as economic properties would allow trading networks to be deployed in a variety of empirical \citep{fox2017specifying} and practical applications \citep{morstyn2018bilateral}. 
    
    In our model, \textit{stable} outcomes, i.e., those immune to arbitrary blocks by sets of firms, do not necessarily exist~\citep{HaKo:12}.\footnote{Stable outcomes exist in supply chains in which no firm can be simultaneously upstream and downstream from another \citep{Ostr:08,West:10,HaKo:12}. Stable outcomes also exist (and coincide with competitive equilibria) in a variant of our model with continuous prices and no distortionary frictions \citep{HaKoNiOsWe:11,FlJaJaTe:17}.} 
Following \citet*{FlJaTaTe:16}, we first point out that outcomes satisfying \textit{trail stability}, an extension of pairwise or chain stability \citep{Ostr:08}, always exist and can be found in linear time (in the number of contracts) by an extension of the Deferred Acceptance algorithm (Theorem \ref{main}). Trail-stable outcomes are immune to locally blocking trails (sequences of distinct contracts in which a buyer of one contract is the seller in the next). In a locally blocking trail, agents simultaneously accept pairs of upstream and downstream contracts along the trail but there must also be agents that ``kick off'' (and ``complete'') the blocking trail by unilaterally offering (or ``accepting'') a single downstream (or a single upstream) contract.
Trail stability is an attractive solution concept: in a variant of our model with continuous prices and distortionary frictions, trail-stable outcomes essentially coincide with competitive equilibrium outcomes \citep*{FlJaJaTe:17}.

    We then relax the requirement of locally blocking trails that agents can only make an upstream (downstream) offer following the receipt of a downstream (upstream) offer. Thus we allow agents to offer an upstream and a downstream contract \textit{simultaneously}, forming \textit{blocking cycles}. As a result, we consider the existence of \textit{path-or-cycle-stable} outcomes which are immune to blocking sets in the form of paths (trails in which every agent is distinct) or cycles.\footnote{We do not suggest that path-or-cycle stability has an intuitive economic interpretation. We use this solution concept to give the strongest computational complexity result possible and to reveal the structure of stable outcomes in flow networks.} Path-or-cycle-stable outcomes do not always exist in trading networks. We show that the decision problem of whether a path-or-cycle-stable outcome exists is $\mathsf{NP}$-complete even in a \textit{flow network} \citep{Flei:14}, which is a special case of a trading network (Theorem~\ref{thm:path+cycle}).\footnote{\citet{garey1979computers} provide definitions of computational complexity classes used in this paper.} The proof reduces the problem to the problem of partitioning a directed graph into two acyclic sets, which is known to be $\mathsf{NP}$-complete \citep{Bokal:02}. Since trading networks are more general than flow networks, it follows that the problem of deciding whether a path-or-cycle-stable outcome exists in a trading network is also $\mathsf{NP}$-hard. Therefore, the ability to coordinate upstream and downstream contract offers minimally renders the computational problem for existence intractable. Our result superficially resembles the problems of determining the existence of pairwise stable outcomes in two-sided hospital-resident markets with couples \citep{mcdermid2010keeping}, with sizes \citep{delacretaz2014matching}, or with multidimensional constraints \citep{delacretaz2016refugee}: in these models, as in ours, finding stable outcomes is hard due to the presence of various constraints. However, as far we know, none of our results has appeared elsewhere in the literature.
    
    Finally, we turn our attention to stable outcomes analyzed in the context of trading networks by \citet{HaKoNiOsWe:14}. While stable outcomes do not always exist in trading networks, \citet{HaKoNiOsWe:14} show that in trading networks blocking sets can be decomposed into certain blocking trails (under a monotonicity condition).
    We first show that in flow networks stable outcomes coincide with path-or-cycle-stable outcomes (Theorem \ref{thm:set-stable-flow}). It follows that deciding whether a stable outcome exists in a flow network 
is $\mathsf{NP}$-complete (Corollary \ref{cor:set-stable-flow}). Finally, we show that even \textit{verifying} that a particular outcome is (not) stable is $\mathsf{NP}$-complete (Theorem \ref{grouphopeless}). The proof provides a reduction from the set partition problem, which is known to be (weakly) $\mathsf{NP}$-complete. 
    
    Section \ref{sec.modelsec} introduces the full model of trading networks and the special case of flow networks (\ref{sec:flow}). In Section \ref{sec.stabconcepts}, we introduce trail stability (\ref{sec.trailstab}), path-or-cycle stability (\ref{sec.pathcyclestab}), and stability (\ref{sec.setstab}), and state the main results about the computational complexity of finding outcomes that satisfy these solution concepts. Section \ref{sec.open} leaves an open question concerning another solution concept---\emph{weak trail stability}---introduced by \citet*{FlJaTaTe:16}. Section \ref{sec.conclusion} is a conclusion.
    

    \section{Model}\label{sec.modelsec}
    Our notation follows \citet*{FlJaTaTe:16}.
    \subsection{Ingredients}
    In a \emph{trading network}, there is finite set of agents (firms or consumers) $F$ and a finite set of contracts $X$.
     A contract $x\in X$ is a bilateral agreement between a buyer $b(x)\in F$
    and a seller $s(x)\in F$. Hence, $F(x):=\{s(x),b(x)\}$ is the
    set of firms associated with contract $x$ and, more generally, $F(Y)$
    is the set of firms associated with contract set $Y\subseteq X$.
    Call $X^B_{f}:=\{x\in X \mid b(x)=f\}$ and $X^S_{f}:=\{x\in X \mid s(x)=f\}$
    the sets of $f$'s upstream and downstream contracts -- for which
    $f$ is a buyer and a seller, respectively. Clearly, $Y^B_{f}$ and
    $Y^S_{f}$ form a partition over the set of contracts $Y{}_{f}:=\{y\in
    Y \mid f\in F(y)\}$ 
    which involve $f$, since an agent cannot be a buyer and a seller
    in the same contract. A firm $f$ is a \textit{terminal seller} if there are no upstream contracts for $f$ in the network and $f$ is a \textit{terminal buyer} if the network does not contain any downstream contracts for $f$. An agent who is either a terminal buyer or a terminal seller is called a \textit{terminal agent}. In an \emph{acyclic} trading network, no agent simultaneously buys and sells from another agent, even via intermediaries.
    
    Every firm has a choice function $C^{f}$, such that $C^{f}(Y_{f})\subseteq Y_{f}$
    for any $Y_{f}\subseteq X_{f}$.\footnote{Since firms only care about their own contracts, we can write $C^{f}(Y)$ to mean $C^{f}(Y_{f})$.} We say that choice functions of $f\in F$ satisfy the \textit{irrelevance of rejected contracts (IRC)} condition if for any $Y\subseteq X$ and $C^{f}(Y)\subseteq Z\subseteq Y$ we have that $C^{f}(Z)=C^{f}(Y)$ \citep{Blai:88,Alka:02,Flei:03,Eche:07,AySo:12}.
    
    For any $Y\subseteq X$ and $Z\subseteq X$, define the \textit{chosen}
    set of upstream contracts 
    \begin{equation}
    C_{B}^{f}(Y|Z):= C^{f}(Y^B_f\cup Z^S_f)\cap X^B_f
    \end{equation}
    which is the set of contracts $f$ chooses as a buyer when $f$ has
    access to upstream contracts $Y$ and downstream contracts $Z.$ Analogously,
    define the chosen set of downstream contracts 
    \begin{equation}
    C_{S}^{f}(Z|Y):= C^{f}(Z^S_f \cup Y^B_f)\cap X^S_f
    \end{equation}
    For brevity, we will also use $C^f(Y|Z):=(C_B^f(Y|Z)|C_S^f(Z|Y))$, so $C^f(Y|Z)=(Y'|Z')$ means that if $f$ is offered upstream and downstream contracts $Y$ and $Z$, respectively, then $Y'$ and $Z'$ are those among them that $f$ chooses (with $Y' \subseteq Y$ and $Z' \subseteq Z$).
    We also define the \textit{rejected} sets of contracts $R_{B}^{f}(Y|Z):= Y_f\setminus C_{B}^{f}(Y|Z)$
    and $R_{S}^{f}(Z|Y):= Z_f\setminus C_{S}^{f}(Z|Y)$. An \textit{outcome}
    $A\subseteq X$ is a set of contracts.
    
    A set of contracts $A\subseteq X$ is \textit{individually rational} for an agent $f\in F$ if $C^{f}(A_{f})=A_{f}$. We call set $A$ \textit{acceptable} if $A$ is 
    individually rational for all agents $f\in F$. For sets of
    contracts $W,A\subseteq X$, we say that $A$ is \textit{$(W,f)$-acceptable} if
    $A_{f}\subseteq C^f(W_{f}\cup A_{f})$, i.e., if the agent $f$ chooses all
    contracts from set $A_f$ whenever she is offered $A$ alongside
    $W$. Set of contracts $A$ is \textit{$W$-acceptable} if $A$ is $(W,f)$-acceptable
      for all agents $f\in F$. Note that contract set $A$ is individually rational for agent $f$ if and only if it is
      $(\emptyset,f)$-acceptable.
      
    \subsection{Assumptions on choice functions}
    
    We can now state our key assumption on choice functions introduced by \citet{Ostr:08}.
    \begin{definition}
    \label{def:full-sat}
    
    Choice functions of $f\in F$ are \textit{fully
    substitutable} if for all $Y'\subseteq Y\subseteq X$ and $Z'\subseteq Z\subseteq X$
    they are:
    \begin{enumerate}
    \begin{minipage}{.5\textwidth}
    \item \textit{Same-side substitutable} (SSS):
    
    \begin{enumerate}
    \item $R_{B}^{f}(Y'|Z)\subseteq R_{B}^{f}(Y|Z)$
    \item $R_{S}^{f}(Z'|Y)\subseteq R_{S}^{f}(Z|Y)$
    \end{enumerate}
    \end{minipage}
    \begin{minipage}{.5\textwidth}
    \item \textit{Cross-side complementary} (CSC):
    
    \begin{enumerate}
    \item $R_{B}^{f}(Y|Z)\subseteq R_{B}^{f}(Y|Z')$
    \item $R_{S}^{f}(Z|Y)\subseteq R_{S}^{f}(Z|Y')$
    \end{enumerate}
    \end{minipage}
    \end{enumerate}
    \end{definition}
    
    Contracts are stitutable if every firm regards any of its
    upstream or any of its downstream contracts as substitutes, but its
    upstream and downstream contracts as complements. Hence, rejected
    downstream (upstream) contracts continue to be rejected whenever the
    set of offered downstream (upstream) contracts expands or whenever
    the set of offered upstream (downstream) contracts shrinks.

    \subsection{Special case: flow networks and flow-based choice functions}\label{sec:flow}
    We first define flow-based choice functions.
    \begin{definition} $C^f$ is \emph{flow-based} if 
    \begin{itemize}
    \item whenever $f \in F$ is a terminal agent, $C^f(Y_f)=Y_f$ for any $Y \subseteq X$;
    \item whenever $f \in F$ is not a terminal agent
    \begin{itemize}
    \item $f$ has linear-order preferences $\succ^B_f$ over $X_f^B$ and linear-order preferences $\succ^S_f$ over $ X_f^S$, and
    \item given access to upstream contracts $Y \subseteq X_f^B$ and downstream contracts $Z \subseteq X_f^S$, a firm $f$ with flow-based choice function will choose its $k$ most preferred upstream contracts from $Y$ according to $\succ^B_f$ and $k$ most preferred downstream contracts from $Z$ according to $\succ^S_f$ where $k:=\min\{|Y|,|Z|\}$. 
     \end{itemize}
     \end{itemize}
     \end{definition}
Since non-terminal agents with flow-based choice functions always pick the same number of upstream and downstream contracts, their preferences satisfy the so-called \emph{Kirchhoff} equality.\footnote{Flow-based choice functions are a special case of ``separable'' choice functions \citep*{FlJaTaTe:16}.} It is immediate that flow-based choice functions defined in this way are fully substitutable and satisfy IRC. 

A \emph{flow network} is a trading network in which there are exactly two terminal agents (and the remaining firms are non-terminal) and all choice functions are flow-based \citep{Flei:14}.\footnote{In fact, what follows is a simplification of Fleiner's flow network model where all capacities have value~1 and flow values must be integral.}

    \section{Solution concepts and their computational complexity}\label{sec.stabconcepts}
    
    \subsection{Trail stability}\label{sec.trailstab}
    In order to explain our first solution concept, we first define a \textit{trail}.
\begin{definition}
A non-empty set of contracts $T$ is a
\textit{trail} if its elements can be arranged in some order $(x_{1},\ldots,x_{M})$ such that
$b(x_{m})=s(x_{m+1})$ holds for all $m\in \{1,\ldots,M-1\}$ where $M=|T|$.
\end{definition}
    
    The following solution concept was introduced by \citet*{FlJaTaTe:16}.
    
\begin{definition} An outcome $A\subseteq X$ is \textit{trail-stable} if 
\begin{enumerate}
\item $A$ is acceptable.
\item There is no trail $T=\{x_1,x_2,\ldots,x_M\}$, such that $T\cap
  A=\emptyset$  and
\begin{enumerate}
\item$\{x_{1}\}$ is $(A,f_1)$-acceptable for $f_{1}=s(x_{1})$, and
\item$\{x_{m-1}, x_{m}\}$ is $(A,f_m)$-acceptable for
$f_{m}=b(x_{m-1})=s(x_{m})$ whenever $1<m\le M$ and
\item$\{x_{M}\}$ is $(A,f_{M+1})$-acceptable for $f_{M+1}=b(x_{M})$.
\end{enumerate}
The above trail $T$ is called a \textit{locally blocking trail to
  $A$}.
\end{enumerate}

\end{definition}
    
    Trail stability is a natural solution concept when firms interact
    mainly with their buyers and suppliers and deviations by arbitrary
    sets of firms are difficult to arrange. In a trail-stable outcome,
    no agent wants to drop his contracts and there exists no sequence
     of \textit{consecutive}
    bilateral contracts comprising a trail such that any intermediate agent who is offered a downstream (upstream) contract along the trail wants to choose it alongside the subsequent upstream (downstream) contract in the trail. Importantly, we require that the first (final) agent wants to unilaterally offer  (accept) the first (final) contract in the trail.\footnote{The trail and the order of conditional acceptances can, of course,
    be reversed with $f_{M+1}$ offering the first upstream contract to
    seller $f_{M}$ and so on.} For trail stability, we do not require that intermediate agents accept all the contracts along the locally blocking trail. Instead they are simply required to myopically accept pairs of upstream and downstream contracts as they appear along the trail. We strengthen this requirement in Section \ref{sec.open}.
    
    \citet*{FlJaTaTe:16} proved that under full substitutability trail-stable outcomes always exist in trading networks and, under certain conditions, have a familiar lattice structure. Trail-stable outcomes may not be Pareto-efficient. 
    In a variant of our model with continuous prices, \citet*{FlJaJaTe:17} showed that whenever there are distortionary frictions, such as sales taxes, in the economy, trail-stable outcomes essentially coincide with competitive equilibrium outcomes.\footnote{The First Fundamental Welfare Theorem can, of course, fail in the presence of distortionary frictions.} Here we focus on the computational properties of trail-stable outcomes.
    
    \begin{theorem}[\citealp*{FlJaTaTe:16}]\label{main}
    Suppose that in a trading network choice functions satisfy full substitutability and IRC. Then a trail-stable outcome exists and can be found in time linear in the number of contracts. 
    \end{theorem}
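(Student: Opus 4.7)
The plan is to generalize the Deferred Acceptance procedure so that proposals flow simultaneously in both the upstream and the downstream direction of the network, and then verify correctness through a Tarski-style fixed-point argument on a lattice of pairs of contract sets, as in Fleiner's treatment of stable matchings.

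First, I would work on the complete lattice $L=2^X\times 2^X$ and define an operator $\Phi$ as follows. Interpret a state $(Y,Z)\in L$ as declaring that, at each potential buyer, the upstream contracts ``on the table'' are $Y$, and at each potential seller the downstream contracts ``on the table'' are $Z$. Let $\Phi(Y,Z)=(Y',Z')$, where $Y'\subseteq Y$ consists of the contracts $x\in Y$ that the buyer $b(x)$ would still select in $C^{b(x)}_B(Y\mid Z)$, and symmetrically $Z'\subseteq Z$ consists of $x\in Z$ still chosen by $s(x)$. By SSS, shrinking the set of same-side offers weakly enlarges the set of rejections; by CSC, shrinking the cross-side offers also weakly enlarges rejections. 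Together these imply that $\Phi$ is monotone with respect to componentwise inclusion on $L$. Iterating $\Phi$ from the top element $(X,X)$ therefore produces a coordinatewise-decreasing sequence that must stabilize at a fixed point $(Y^\ast,Z^\ast)$, and I take as the candidate outcome $A:=Y^\ast\cap Z^\ast$, i.e.\ the contracts chosen on both endpoints at the fixed point.

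Next, I would verify that $A$ is trail-stable. Individual rationality follows from the fixed-point condition and IRC: no $x\in A$ is rejected by either of its endpoints at $(Y^\ast,Z^\ast)$, so $C^f(A_f)=A_f$ for every $f$. To rule out locally blocking trails, suppose for contradiction that $T=(x_1,\ldots,x_M)$ with $T\cap A=\emptyset$ is locally blocking, with associated agents $f_1,\ldots,f_{M+1}$. I would argue inductively that each $x_m$ survives in $Y^\ast$ or $Z^\ast$ at the relevant endpoint. The base case uses $(A,f_1)$-acceptability of $\{x_1\}$: since $f_1=s(x_1)$ still wants to offer $x_1$ given $A_{f_1}\cup\{x_1\}$, monotonicity of rejections under SSS/CSC (comparing $A_{f_1}\cup\{x_1\}$ to the full fixed-point menu at $f_1$) forces $x_1\in Z^\ast$. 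The inductive step uses the $(A,f_m)$-acceptability of $\{x_{m-1},x_m\}$ to propagate survival of $x_m$ into $Z^\ast$, and the final agent $f_{M+1}$ yields a contradiction because $x_M$ would have had to be in $Y^\ast\cap Z^\ast=A$.

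The linear-time bound is the easiest part: iterating $\Phi$ from $(X,X)$ strictly decreases $|Y|+|Z|$ at every non-trivial update, so at most $2|X|$ updates occur, and each update can be localized to the agents whose menus actually changed, giving total work linear in $|X|$ once each choice-function evaluation is counted as an oracle call. The main obstacle is the inductive step ruling out blocking trails: one must compare $C^{f_m}$ evaluated on the hypothetical configuration $A_{f_m}\cup\{x_{m-1},x_m\}$ with $C^{f_m}$ evaluated on $Y^\ast_{f_m}\cup Z^\ast_{f_m}$, and invoke SSS on the same side together with CSC on the cross side to transfer the survival of $x_m$ from the blocking configuration to the fixed-point configuration, handling the endpoints $f_1$ and $f_{M+1}$ separately since only a single contract is offered there. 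This is where the full force of substitutability is needed, and where IRC enters in order to pass between equivalent menus that differ only by already-rejected contracts.
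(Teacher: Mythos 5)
Your overall strategy---a deferred-acceptance/fixed-point iteration on pairs of contract sets---is exactly the approach the paper invokes (it cites the generalized Gale--Shapley algorithm of Fleiner, Jank\'o, Tamura, and Teytelboym, which rejects at least one contract per round, giving the linear bound), but your concrete operator and the monotonicity claim behind it are both wrong. First, the monotonicity claim: same-side substitutability says $R_{B}^{f}(Y'|Z)\subseteq R_{B}^{f}(Y|Z)$ for $Y'\subseteq Y$, i.e.\ \emph{enlarging} the same-side menu enlarges rejections (you state the opposite direction), while cross-side complementarity says \emph{shrinking} the cross-side menu enlarges rejections. These pull in opposite directions, so under full substitutability the relevant operator is monotone only with respect to the mixed order ($Y\subseteq \tilde{Y}$ and $Z\supseteq \tilde{Z}$), not componentwise inclusion; reversing one coordinate is precisely what makes the two-sided fixed-point argument work. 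Your iteration still terminates, but only because your $\Phi$ is deflationary by construction, not because of any monotonicity.

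Second, and fatally, your operator never transmits rejections across the market: $Y$ shrinks only by the buyers' own rejections and $Z$ only by the sellers' own, so a contract rejected by its buyer stays in its seller's menu forever and the seller keeps ``holding'' it. Concretely, let $s$ sell to two terminal buyers $b_1,b_2$ via contracts $x_1,x_2$, where $s$ has unit supply and prefers $x_1$, $b_1$ rejects everything, and $b_2$ accepts everything (all these choice functions are fully substitutable and satisfy IRC). Your iteration from $(X,X)$ reaches the fixed point $Y^{\ast}=\{x_2\}$, $Z^{\ast}=\{x_1\}$, hence $A=Y^{\ast}\cap Z^{\ast}=\emptyset$; but $\{x_2\}$ is a locally blocking trail for $\emptyset$, so your output is not trail-stable, and your inductive step against blocking trails cannot be repaired because its premise (that acceptability given $A$ forces survival at the fixed point) fails here. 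The correct operator, as in the cited work, cross-communicates rejections: the buyers' menu is the set of contracts \emph{not rejected by their sellers} and vice versa, i.e.\ $\Phi(Y,Z)=\bigl(X\setminus \bigcup_{f}R_{S}^{f}(Z|Y),\; X\setminus \bigcup_{f}R_{B}^{f}(Y|Z)\bigr)$; with that operator the mixed-order monotonicity, termination with at least one rejection per round, and trail stability of $Y^{\ast}\cap Z^{\ast}$ all go through.
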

    The proof of Theorem \ref{main} follows immediately from the proof of existence of trail-stable outcomes provided by \citet*{FlJaTaTe:16} and \citet{adachi2017stable}. Trail-stable outcomes are found by a generalized Gale-Shapley algorithm defined in \citet*{FlJaTaTe:16}. In the generalized Gale-Shapley algorithm, at least one contract is rejected at each step and since the number of contracts is finite, the number of steps required to find a trail-stable outcome is bounded by the number of contracts.
    
    \subsection{Path-or-cycle stability}
    \label{sec.pathcyclestab}
    The definition of trail stability requires that there be initial and final agents that would unconditionally offer or accept an upstream or a downstream contract while all the intermediaries only make a downstream (upstream) offer after receiving an upstream (downstream) offer. Let us now relax this condition and allow agents to form blocking cycles: now every agent can offer an upstream at a downstream contract simultaneously without having to accept them individually. This is a mild strengthening of trail stability (whenever there is at most one contract between agents) that treats the initial and the terminal agents in the blocking trail in the same way as the intermediate agents in the blocking trail and allows for minimal additional coordination among blocking agents.
    
    For the sake of further simplicity, let us only focus on \textit{paths}, i.e., trails in which \textit{all agents are distinct}.
    \begin{definition}A non-empty set of contracts $P$ ($|P|=M$) is a
\textit{path} if its elements can be arranged in some order $(x_{1},\ldots,x_{M})$ such that
\begin{itemize}

\item $b(x_{m})=s(x_{m+1})$ holds for all $m\in \{1,\ldots,M-1\}$, and
\item all firms $s(x_1), s(x_2), \dots, s(x_M)$, and $b(x_M)$ are distinct.
\end{itemize}

    \end{definition}
    Now, a path that ``returns'' to its origin agent is a \textit{cycle}.
    \begin{definition}
    A non-empty set of contracts $C$ ($|C|=M$) is a
\textit{cycle} if its elements can be arranged in some order $(x_{1},\ldots,x_{M})$ such that
\begin{itemize}

\item $b(x_{m})=s(x_{m+1})$ holds for all $m\in \{1,\ldots,M-1\}$, 
\item all firms $s(x_1), s(x_2), \dots, s(x_M)$, and $b(x_M)$ are distinct, and 
\item $b(x_M)=s(x_1)$.
\end{itemize}

    \end{definition}
    Cycles are therefore also trails.
    \begin{definition} An outcome $A\subseteq X$ is \textit{path-or-cycle-stable} if 
    \begin{enumerate}
    \item $A$ is acceptable.
    \item There is no path or cycle $B$ such that $B \cap A = \emptyset$ and $B$ is $(A,f)$-acceptable for each $f \in F(B)$.\\ 
    Such paths or cycles are called \emph{blocking paths} and \emph{blocking cycles}.\footnote{Note that path-or-cycle stability is weaker than chain stability in the sense of \citet{HaKoNiOsWe:14}.} 
    \end{enumerate}
    \end{definition}
    
    An interesting property of flow-based choice functions is that given an outcome $A \subseteq X$, any cycle $C$ disjoint from $A$ is a blocking cycle, as any firm which is offered a pair of additional upstream and downstream contracts will accept them. We will use this property to prove our second key result.
    
    \begin{theorem}
    \label{thm:path+cycle}
    It is $\mathsf{NP}$-complete to decide if a flow network admits a path-or-cycle-stable outcome. 
    \end{theorem}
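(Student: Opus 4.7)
To prove this, I would verify $\mathsf{NP}$-membership and then establish $\mathsf{NP}$-hardness via a reduction from the problem of partitioning the edges of a directed graph into two acyclic subsets, shown $\mathsf{NP}$-complete by \citet{Bokal:02}. For membership, a candidate outcome $A$ can be verified in polynomial time. Acceptability of $A$ reduces to Kirchhoff's equality $|A^B_f|=|A^S_f|$ at every non-terminal agent $f$. By the observation preceding the theorem, every cycle disjoint from $A$ is automatically a blocking cycle, so the absence of blocking cycles is equivalent to $X\setminus A$ being acyclic as a directed multigraph---testable in linear time by topological sort. For blocking paths, first enumerate the contracts $x\in X\setminus A$ for which $\{x\}$ is $(A,s(x))$-acceptable and those for which $\{x\}$ is $(A,b(x))$-acceptable; then for each such pair $(x_1,x_M)$ check by reachability whether $X\setminus A$ contains a simple directed path starting with $x_1$ and ending with $x_M$ (deleting $s(x_1)$ and $b(x_M)$ from the interior of the auxiliary graph to enforce vertex-disjointness). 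Intermediate agents along any candidate path are automatically satisfied: extending an acceptable outcome at a non-terminal agent by a matching upstream--downstream pair preserves Kirchhoff, so all incident contracts remain chosen.

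For $\mathsf{NP}$-hardness, given a digraph $G=(V,E)$, I would construct a flow network $N(G)$ whose path-or-cycle-stable outcomes correspond to 2-acyclic edge-partitions of $E$. The backbone of $N(G)$ has terminals $s,t$, a non-terminal agent for each $v\in V$, a contract $e_{uv}\colon u\to v$ for each $(u,v)\in E$, and at each non-terminal vertex enough auxiliary source- and sink-connection contracts to provide the Kirchhoff slack needed to accommodate any subset $E^{*}\subseteq E$ as $A\cap\{e_{uv}\}_{(u,v)\in E}$. The preferences would be designed so that (a)~every path-or-cycle-stable $A$ must include all auxiliary contracts, which forecloses blocking paths that would start at $s$ or end at $t$; and (b)~through a suitable edge-gadget---for instance, ``mirror'' contracts $v\to u$ for each $(u,v)\in E$---a blocking path between two non-terminal agents exists if and only if $A\cap E$ contains a directed cycle. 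Under such a construction, a path-or-cycle-stable $A$ immediately yields a 2-acyclic partition $(A\cap E,\,E\setminus A)$: the first part is acyclic by~(b), and the second by the no-blocking-cycle condition. Conversely, given a 2-acyclic partition $(E_1,E_2)$ of $E$, one sets $A\cap E=E_1$ and balances Kirchhoff via the auxiliary contracts.

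The main obstacle is step~(b): arranging the edge-gadget and local preferences so that cycles in $A\cap E$ correspond precisely to blocking paths in $X\setminus A$, while simultaneously ensuring that the gadget itself does not create spurious blocking cycles or enable blocking paths in the acyclic case. The interaction between the Kirchhoff slack provided by auxiliary contracts, the preference ordering at each non-terminal agent, and the vertex-disjointness requirement of a blocking path must all be carefully coordinated; this combinatorial design is where the bulk of the hardness argument lives.
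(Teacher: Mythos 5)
Your $\mathsf{NP}$-membership argument is fine and essentially matches the paper's: acceptability is the Kirchhoff equality, blocking cycles are exactly the cycles of $X \setminus A$, and blocking paths reduce to reachability between single contracts that are $(A,\cdot)$-acceptable at their endpoint, since intermediate agents accept upstream--downstream pairs automatically. The problem is the hardness half, where there are two genuine gaps. First, your starting point is unsupported: \citet{Bokal:02} prove $\mathsf{NP}$-completeness of partitioning the \emph{vertices} of a digraph into two acyclic sets, not the \emph{edges}. The edge version does not follow from their result --- the natural line-digraph translation reduces the edge version \emph{to} the vertex version, which is the wrong direction for hardness --- so your reduction is anchored to a problem whose hardness you have not established. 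Second, and more seriously, the heart of the reduction (your step~(b)) is not a proof but an acknowledged open design problem, and the one concrete mechanism you float does not work: if you add a mirror contract $v \to u$ for each edge $(u,v) \in E$, then whenever both $e_{uv}$ and its mirror lie outside $A$ they form a two-contract directed cycle disjoint from $A$, hence a spurious blocking cycle (recall the observation before the theorem: in a flow network \emph{every} cycle disjoint from $A$ blocks). Avoiding this forces the mirrors' membership in $A$ to be coordinated with that of the original edges, which collides with the Kirchhoff balance you need at every non-terminal agent --- exactly the coordination you defer.

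The paper sidesteps all of this by encoding the vertex 2-coloring rather than an edge 2-coloring. Each vertex $v$ of $D$ becomes a \emph{node gadget} with two parallel internal routes (an $a$-route and a $b$-route) between a splitter firm $s_v$ and a merger firm $t_v$, hooked to the two terminals $s$ and $t$. Kirchhoff at $s_v$ forces any stable outcome to carry flow along exactly one of the two routes, and this binary choice \emph{is} the color of $v$; each edge $uv \in E$ is represented twice, by an inter-gadget contract at the $a$-level ($a'_u a_v$) and one at the $b$-level ($b'_u b_v$). A short argument using the preferences of $s_v$ and $t_v$ shows no stable outcome uses inter-gadget contracts, and then a monochromatic cycle in $D$ corresponds precisely to a cycle of unused contracts at one level, i.e., a blocking cycle, while blocking paths are excluded because no unused single contract is acceptable at both a start and an end point simultaneously. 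If you want to salvage your outline, the cleanest fix is to abandon the edge-partition formulation and the mirror gadget and instead build a per-vertex gadget whose Kirchhoff constraint forces a binary choice --- which is essentially reconstructing the paper's proof.
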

    
    \begin{proof}
    The problem is in $\mathsf{NP}$, since given an outcome $A$, we can check in linear time (with respect to the number of contracts) whether it admits a blocking path or a cycle: we only have to check that the contracts $X \setminus A$ do not contain a cycle or a path starting with an \ch{($A,\cdot)$-acceptable} contract and ending with an \ch{($A,\cdot)$-acceptable} contract; both these tasks can be decided using, e.g., some variant of the depth-first search (DFS) algorithm on the directed graph representing $X \setminus A$.
    
    To prove $\mathsf{NP}$-hardness, we present a polynomial reduction from the following problem: given a directed graph $D$, decide whether it is possible to partition the vertices of $D$ into two acyclic sets $V_1$ and $V_2$. Here, a set $U$ of vertices is \emph{acyclic}, if there is no directed cycle in $D[U]$. This problem was proved to be $\mathsf{NP}$-complete by~\citet{Bokal:02}.

    \begin{figure}[t]
    \begin{center}
    \includegraphics[scale=1.2]{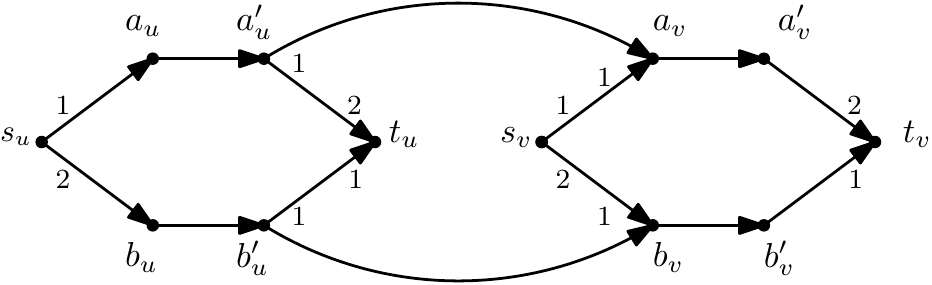}
    \caption{Illustration of two node gadgets, $D_u$ and $D_v$, and the two contracts in $Z$ connecting them; the figure assumes $uv \in E$. The number on some contract $x$ written next to some firm indicates the rank of the contract $x$ in the preference ordering of the given firm.}
    \label{fig-reduction}
    \end{center}
    \end{figure}
    
    Given our input $D=(V,E)$, we construct a set $X$ of contracts and a set $F$ of firms with flow-based choice functions
    (see Fig.~\ref{fig-reduction} for an illustration). 
    There will be at most one upstream and at most one downstream contracts between any two firms, so we will denote a contract $x$ with $s(x)=f_1$ and $b(x)=f_2$ as $f_1 f_2$. 
    First, we introduce a \textit{node gadget} $D_v$ for each $v \in V$; 
    the set of firms in $D_v$ is $\{s_v,a_v,a'_v,b_v,b'_v,t_v\}$ and 
    the set of contracts in $D_V$ is $\{s_v a_v, s_v b_v, a_v a'_v, b_v b'_v, a'_v t_v, b'_v t_v\}$. 
    Next, we add terminal firms $s$ and $t$, 
    together with the contracts $s s_v$ and $t_v t$ for each $v \in V$. 
    Finally, we add the contract set $Z= \{ a'_u a_v, b'_u b_v \mid uv \in E\}$. 
    
    Instead of describing the full preferences of each firm over its upstream and downstream contracts, we only define a partial ordering (and assume that the preferences of each firm respect this partial order). Namely, for each $v \in V$, we let $s_v$ prefer $s_v a_v$ to $s_v b_v$, and we let $t_v$ prefer $b'_v t_v$ to $a'_v t_v$.  \ch{In Fig.~\ref{fig-reduction}, we indicate the rankings of the contracts with numbers 1 (highest rank), 2 (second highest) etc.} Additionally, we let any firm in $\{a_v,a'_v,b_v,b'_v\}$ prefer all contracts not in $Z$ to contracts in $Z$. 
    
    We claim that there exists an outcome in $X$ admitting neither blocking paths nor blocking cycles if and only if the vertices of $D$ can be partitioned into two acyclic sets.
    
    ``$\Rightarrow$'': Let us suppose that there exists an outcome $A \subseteq X$ that does not admit any blocking paths or cycles. 
    
    We show that $Z \cap A = \emptyset$. To see this, first consider a contract $a'_u a_v \in Z$, and suppose for contradiction that $a'_u a_v \in A$. Since $a_v$ has only one downstream contract $a_v\ch{a'_v}$, this means that the contract $s_v a_v$ cannot be contained in $A$ (because of the Kirchhoff equality for $a_v$). Note also that $s_v a_v $ is \ch{$(A, a_v)$-acceptable}, because it is a contract preferred by $a_v$ to $a'_u a_v$. Consider the path $P$ from $s$ through $s_v$ to $a_v$. Clearly, if neither contract on $P$ is in $A$, then it is a blocking path, otherwise the contract $s_v a_v$ is \ch{$(A,s_v)$-acceptable} and hence a blocking path itself, a contradiction. 
    So suppose now $b'_u b_v \in A$. Arguing analogously as before, we can prove that either the path from $b'_u$ through $t_u$ to $t$, or simply the contract $b'_u t_u$ is blocking. Thus we obtain that $A$ cannot contain any contracts from $Z$, and only contains contracts within node gadgets and contracts where the seller or the buyer is the terminal agent $s$ or the terminal agent $t$, respectively. 
    
    Therefore, we know that for each $v \in V$, at most one of the contracts $a_v a'_v$ and $b_v b'_v$ can be contained in $A$ (since $s_v$ can choose at most one of its downstream contracts by the Kirchhoff equality). 
    Let $Q=\{v \in V \mid a_v a'_v \notin A\}$, and let $R=V \setminus Q$; clearly $b_v b'_v \notin A$ for any $v \in R$. It is not hard to see that both $Q$ and $R$ are acyclic. Indeed, any cycle within vertices of $Q$ in $D$ corresponds to a cycle using only contracts of $Z$ and the contracts $a_v a'_v$, $v \in Q$, and such a cycle cannot exists as it would block $A$. The same argument works to show the acyclicity of $R$, proving the first direction of our reduction.

    ``$\Leftarrow$'': Assume now that $Q$ and $R$ are two acyclic subsets of $V$ forming a partition. We define an outcome $A \subseteq X$ that contains the contracts $s s_v$, $s_v b_v$, $b_v b'_v$, $b'_v t_v$, and $t_v t$ for each $v \in Q$, and similarly, the contracts $s s_v$, $s_v a_v$, $a_v a'_v$, $a'_v t_v$, and $t_v t$ for each $v \in R$. We claim that there is no blocking path or cycle for $A$. 
    
    To see this, observe that by the Kirchhoff equality a contract that is \ch{$(A,\cdot)$-acceptable} in itself for some firm (but is not contained in $A$) must be either $s_u a_u$ for some $u \in Q$ or $b'_v t_v$ for some $v \in R$. Since there is no path starting with a contract $s_u a_u$ and ending with a contract $b'_v t_v$, we know that no path can block $A$. 
    
    To show that $A$ admits no blocking cycles, we simply use that blocking cycles must be disjoint from $A$. Note that any cycle has to use at least one contract of $Z$, as node gadgets are acyclic. We partition the contracts in $Z$ into four sets as follows; recall that $A \cap Z=\emptyset$.
    Let $Z_{Q,R}$ denote those contracts in $Z$ that leave a node gadget $D_u$ with $u \in Q$ and arrive at a node gadget $D_v$ with $v \in R$; we define $Z_{R,Q}$, $Z_{Q,Q}$, and $Z_{R,R}$ analogously. To see that no contract $xy \in Z_{Q,R} \cup Z_{R,Q}$ can be part of a blocking cycle for $A$, note that $A$ contains either the unique upstream contract of $x$ or the unique downstream contract of $y$, by the definition of $A$. 
    In either case, any cycle that contains the contract $xy$ must also contain a contract in $A$, and hence cannot be a blocking cycle.
    
    By the same reasoning, no contract $b'_u b_v$ for some $u,v \in Q$ can be contained in a blocking cycle, since both the unique upstream contract of $b'_u$ and the unique downstream contract of $b_v$ are contained in $A$. Therefore, any contract of $Z_{Q,Q}$ used by a blocking cycle must be of the form $a'_u a_v$ for some $u,v \in Q$. Similarly, any contract of $Z_{R,R}$ used by a blocking cycle must be of the form $b'_u b_v$ for some $u,v \in R$. 
    By the structure of the network, this implies that no cycle can use contracts both from $Z_{Q,Q}$ and from $Z_{R,R}$. However, any blocking cycle that uses only contracts in $Z_{Q,Q}$ and contracts of the form $a_u a'_u$ with $u \in Q$ directly corresponds to a cycle within $D[Q]$. Similarly, any blocking cycle using only contracts in $Z_{R,R}$ and contracts of the form $b_u b'_u$ with $u \in R$ yields a cycle within $D[R]$. Therefore, the acyclicity of $Q$ and $R$ ensures that $A$ admits no blocking cycle, proving the correctness of our reduction.
    \end{proof}
    
    If we translate Theorem~\ref{thm:path+cycle} into the language of \emph{stable flows} introduced by~\citet{Flei:14}, we obtain that it is $\mathsf{NP}$-complete to decide whether a \emph{completely stable flow} exists in a given network with preferences, where a flow is completely stable if it admits neither blocking paths nor blocking cycles. 
    In fact, our statement holds not only in the discrete case (as implied directly by Theorem~\ref{thm:path+cycle}), but also in the continuous case where the flow can take real values as well; adjusting the proof of Theorem~\ref{thm:path+cycle} to this case is straightforward. Hence we settle a conjecture posed by~\citet{Flei:14}.
    
    \subsection{Stability}
    \label{sec.setstab}
    We now relax the assumption the blocking sets must be paths or cycles and consider general set blocks in trading networks \citep{HaKoNiOsWe:14} and flow networks \citep{Flei:14}.
    \begin{definition} An outcome $A\subseteq X$ is \textit{stable}  if:
    
    \begin{enumerate}
    \item $A$ is acceptable.
    \item There exist no non-empty set of contracts $Z\subseteq X$, such that $Z\cap A=\emptyset$ and $Z$ is $(A,f)$-acceptable for all $f\in F(Z)$; such sets are called blocking. 
    \end{enumerate}
    \end{definition}
    
    Stable outcomes are immune to deviations by arbitrary groups of firms, which can re-contract freely among themselves while keeping any of their existing contracts. Stable outcomes always exist in acyclic networks if choice functions are fully substitutable. However, \citet{Flei:14} and
    \citet{HaKo:12} showed that stable outcomes may not exist in
    general trading networks. 
    
    \subsubsection{Stability in flow networks}
    \label{sec.flowstab}
    We next prove that in flow networks path-or-cycle-stable outcomes coincide with stable outcomes. 
    \begin{theorem}
    \label{thm:set-stable-flow}
    In a flow network an outcome is path-or-cycle-stable if and only if it is stable.
    \end{theorem}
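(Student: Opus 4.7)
The overall plan is to dispatch the easy direction immediately and then reduce the other direction to a case analysis on the directed graph naturally associated with a putative blocking set.

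For $(\Leftarrow)$, any blocking path or blocking cycle is itself a blocking set, so stability trivially implies path-or-cycle-stability.

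For $(\Rightarrow)$, I argue by contrapositive: assuming that $A$ is acceptable but admits a blocking set $Z$, I will exhibit a blocking path or a blocking cycle contained in $Z$. Let $G_Z$ be the directed graph on the vertex set $F(Z)$ whose arcs are the contracts in $Z$, oriented from seller to buyer. If $G_Z$ contains a directed cycle, take any simple such cycle $C \subseteq Z$. Every firm $f$ on $C$ is non-terminal and receives exactly one upstream and one downstream contract from $C$, both absent from $A$; since $A$ is individually rational, the Kirchhoff equality gives $|A_f^B|=|A_f^S|=a$, and the flow-based choice parameter in response to $A_f \cup C_f$ is $k=\min(a+1,a+1)=a+1$, so every offered contract on each side is picked. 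Hence $C_f \subseteq C^f(A_f \cup C_f)$ for all $f \in F(C)$ and $C$ is a blocking cycle.

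Otherwise $G_Z$ is a DAG. Pick any source $v_1^*$ of $G_Z$ and walk forward along outgoing $Z$-arcs; by acyclicity and finiteness the walk visits distinct vertices and terminates at a sink $v_m^*$ of $G_Z$, yielding a directed path $P=(z_1^*,\dots,z_m^*) \subseteq Z$ whose firms are distinct. Each intermediate firm on $P$ receives exactly one new upstream and one new downstream contract and accepts them by the argument of the cycle case. For the endpoint $v_1^* = s(z_1^*)$: if $v_1^*$ is a terminal agent, then $\{z_1^*\}$ is trivially $(A,v_1^*)$-acceptable; otherwise $Z_{v_1^*}\subseteq X_{v_1^*}^S$, and the hypothesis that $Z_{v_1^*}$ is $(A,v_1^*)$-acceptable, together with flow-based choice (where $k=\min(a,\,a+|Z_{v_1^*}^S|)=a$), forces $|Z_{v_1^*}^S|\le a$ and forces $Z_{v_1^*}^S$ to lie inside the top $a$ of the combined downstream ranking of $A_{v_1^*}^S \cup Z_{v_1^*}^S$. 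Hence the bottom $|Z_{v_1^*}^S|$ positions of that ranking are all occupied by elements of $A_{v_1^*}^S$, so each $z\in Z_{v_1^*}^S$ beats at least one contract of $A_{v_1^*}^S$; in particular $z_1^*$ sits in the top $a$ of $A_{v_1^*}^S \cup \{z_1^*\}$, and $\{z_1^*\}$ is $(A,v_1^*)$-acceptable. The endpoint $v_m^*$ is handled symmetrically, so $P$ is a blocking path.

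The main technical point is the endpoint acceptance in the DAG case: cross-side complementarity warns that shrinking the downstream offer at $v_1^*$ from $Z_{v_1^*}^S$ down to $\{z_1^*\}$ could in principle cause $z_1^*$ to be rejected, and ruling this out requires the concrete structure of flow-based choice (linear preferences together with the Kirchhoff-driven value of $k$). The cycle case and the intermediate vertices of the path are essentially free, since Kirchhoff keeps $k$ just large enough that every offered contract on each side is automatically picked.
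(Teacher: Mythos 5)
Your proof is correct and takes essentially the same route as the paper's: the easy direction is identical, and for the converse you case-split on whether the blocking set contains a cycle, using the Kirchhoff equality to show that any cycle disjoint from $A$ is blocking, and otherwise extracting a source-to-sink path whose intermediate firms accept by Kirchhoff and whose endpoints accept by the flow-based ranking (displacement) argument --- exactly the paper's maximal-path argument. The only differences are cosmetic, e.g.\ your explicit use of individual rationality to fix $|A_f^B|=|A_f^S|$ and the explicit top-$a$ ranking computation at the endpoints, which the paper states more tersely.
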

    
    \begin{proof}
    Let $X$ be a set of firms in a flow network. 
    Using the definitions, it is immediate that a stable outcome $A$ is also path-or-cycle-stable, as a blocking path or cycle is naturally a blocking set as well. 
    
    For the opposite direction, assume that $A$ is a path-or-cycle-stable outcome. 
    Towards contradiction, let us also assume that $B \subseteq X$ is a blocking set for $A$. 
    Suppose first that $B$ contains a cycle $B^C$. Then $B^C$ is disjoint from $A$ because $B^C \subseteq B \subseteq X \setminus A$. Moreover, $B^C$ is $(A,f)$-acceptable for any firm~$f$: if $f$ is a terminal then it accepts all contracts offered, and if $f$ is non-terminal then, by the Kirchhoff equality, it accepts $B^C$ alongside with $A$ since $B^C_f$ is either empty or it contains an upstream and a downstream contract for $f$. Hence, $B^C$ is a blocking cycle for $A$. This proves that $B$ cannot contain any cycles.
    Let us now consider a path $P=(x_1,\dots,x_p)$ in $B$ that is  maximal (in the sense that no contracts can be added to $P$ to obtain a longer path). By the acyclicity of $B$, $s:=s(x_1)$ must be a firm with no upstream contracts in $B$, and $t:=b(x_p)$ must be a firm with no downstream contracts in $B$. 
    
    Since $B$ is blocking for $A$, we know that $B$ is $(A,s)$-acceptable. Recall that $s$ has a flow-based choice function. Therefore either $s$ is a terminal firm (always accepting every offered contract), or $s$ must obey the Kirchhoff inequality, and therefore can accept the downstream contract $x_1 \in X^S_s \cap B$ only if there is a less preferred downstream contract in $A$. Note that this means that the contract $x_1$ is \ch{$(A,s(x_1))$-acceptable}. Using the fact that $B$ is $(A,t)$-acceptable, we can argue analogously to show that the contract $x_p$ is \ch{$(A,b(x_p))$-acceptable}. 
    Finally, consider any intermediary firm $f$ lying on path $P$; assume that $f=b(x_i)=s(x_{i+1})$ for some $i \in \{1, \dots, p-1\}$. Again, either $f$ is terminal (and thus accepts all offers) or it is non-terminal and obeys the Kirchhoff equality. In the latter case, the path~$P$ is $(A,f)$-acceptable because it contains exactly one upstream contract and one downstream contract for $f$, namely $x_i$ and $x_{i+1}$. 
    Hence we get that $P$ is a blocking path for $A$, a contradiction.    
    \end{proof}
    
    Theorems~\ref{thm:path+cycle} and~\ref{thm:set-stable-flow} imply that deciding whether a stable outcome exists in a flow network is $\mathsf{NP}$-complete. 
    
    \begin{corollary}
    \label{cor:set-stable-flow}
    It is $\mathsf{NP}$-complete to decide if a flow network admits a stable outcome. 
    \end{corollary}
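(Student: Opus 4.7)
The plan is to derive the corollary as an immediate consequence of the two preceding theorems, so the proof should essentially be two short paragraphs linking them together.

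For membership in $\mathsf{NP}$, I would invoke Theorem~\ref{thm:set-stable-flow}: an outcome in a flow network is stable if and only if it is path-or-cycle-stable. Therefore, a certificate for a ``yes''-instance is simply an outcome $A$, and verification reduces to checking that $A$ admits no blocking path and no blocking cycle. This was already shown to be doable in polynomial time in the opening paragraph of the proof of Theorem~\ref{thm:path+cycle} via a DFS-style search on the directed graph representing $X \setminus A$, so no new argument is needed.

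For $\mathsf{NP}$-hardness, I would reuse the reduction constructed in the proof of Theorem~\ref{thm:path+cycle}. That reduction takes an instance $D$ of the vertex-partition-into-two-acyclic-sets problem and produces a flow network which admits a path-or-cycle-stable outcome if and only if $V(D)$ can be partitioned into two acyclic subsets. Applying Theorem~\ref{thm:set-stable-flow} to this same flow network, the existence of a stable outcome is equivalent to the existence of a path-or-cycle-stable outcome, and hence equivalent to the existence of the desired acyclic partition of $D$. Since the latter problem is $\mathsf{NP}$-hard by~\citet{Bokal:02}, the same reduction establishes $\mathsf{NP}$-hardness of the stable-outcome existence problem in flow networks.

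There is no real obstacle here: the work has already been done in Theorems~\ref{thm:path+cycle} and~\ref{thm:set-stable-flow}, and the corollary is just the logical ``and'' of the two. The only thing worth noting explicitly in the write-up is that the reduction used for Theorem~\ref{thm:path+cycle} is polynomial and produces a flow network (not an arbitrary trading network), so Theorem~\ref{thm:set-stable-flow} applies to it verbatim.
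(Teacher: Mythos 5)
Your proposal is correct and matches the paper's own reasoning exactly: the paper derives Corollary~\ref{cor:set-stable-flow} precisely by combining Theorem~\ref{thm:path+cycle} (whose reduction already produces a flow network and whose first paragraph gives the polynomial-time verification) with the equivalence of stability and path-or-cycle stability from Theorem~\ref{thm:set-stable-flow}. Nothing further is needed.
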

    
    \subsubsection{Stability in trading networks}
    \label{sec.tradestab}
    We finally turn to the existence of stable outcomes in trading networks which was first considered by \citet{HaKoNiOsWe:14}. They showed that (under certain conditions) stable outcomes are equivalent to outcomes that are immune to blocks along trails in which every firm can simultaneously offer and accept all its contracts in the trail. However, it is easy to see that in general trading networks path-or-cycle stability does not imply stability. But since flow networks are a special case of the trading networks that we consider in our general model (because flow-based choice functions are fully substitutable and satisfy IRC), Corollary~\ref{cor:set-stable-flow} implies that deciding whether a stable outcome exists is $\mathsf{NP}$-hard also in our trading networks model. 
    
    \begin{corollary}
    \label{cor:set-stable-general}
    Suppose that in a trading network choice functions satisfy full substitutability and IRC. 
    Then it is $\mathsf{NP}$-hard to decide if the trading network admits a stable outcome. 
    \end{corollary}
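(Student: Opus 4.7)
The plan is to obtain this statement as an immediate corollary of Corollary~\ref{cor:set-stable-flow} by exhibiting a trivial (identity) polynomial-time reduction from the flow-network version of the stability decision problem to the trading-network version. Concretely, I would argue that any flow network is already a legitimate input to the trading-network stability problem under the hypotheses of full substitutability and IRC, so any hardness for the restricted class transfers to the more general class.

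First, I would recall the remark made in Section~\ref{sec:flow} that flow-based choice functions are fully substitutable and satisfy IRC; this is precisely what is needed to view a flow network as an instance of our trading-network model satisfying the hypotheses of the corollary. Second, I would note that the notions of acceptability and of a blocking set of contracts are defined purely in terms of the firms' choice functions and do not depend on any additional flow-network structure, so a given outcome $A\subseteq X$ is stable in the flow-network sense if and only if it is stable when the same $X$ and the same choice functions are interpreted as a trading network. Hence ``yes''-instances and ``no''-instances of the two decision problems coincide under this identity embedding.

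Third, I would invoke Corollary~\ref{cor:set-stable-flow}, which establishes that deciding the existence of a stable outcome in a flow network is $\mathsf{NP}$-complete. Combined with the previous observation, this yields a polynomial-time (in fact, identity) reduction from an $\mathsf{NP}$-hard problem to the trading-network stability problem under full substitutability and IRC, which suffices for $\mathsf{NP}$-hardness.

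There is essentially no obstacle in this plan: no new gadget or combinatorial argument is required. The only point that needs a sentence of care is the observation that the embedding of a flow network into the trading-network model preserves both choice functions and the definitions of acceptability and blocking verbatim, so that stability is literally the same property in both settings. Note that I would not claim membership in $\mathsf{NP}$ for the general problem (and the statement does not require it), since without additional assumptions one cannot in general evaluate the choice functions $C^f$ succinctly.
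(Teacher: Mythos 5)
Your proposal is correct and matches the paper's own argument: the paper derives Corollary~\ref{cor:set-stable-general} exactly this way, observing that flow-based choice functions are fully substitutable and satisfy IRC, so flow networks are a special case of trading networks and the hardness of Corollary~\ref{cor:set-stable-flow} transfers directly. Your remark about claiming only $\mathsf{NP}$-hardness (not $\mathsf{NP}$-completeness) in the general setting also agrees with the paper's statement.
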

    
    In fact, dealing with stable outcomes is even trickier in trading networks. Our last result demonstrates that even \textit{verifying} whether an outcome is stable is computationally intractable in general trading networks.\footnote{The proof of Theorem~\ref{thm:path+cycle} shows that in flow networks verifying whether an outcome is stable can be done in time linear in the number of contracts.}

    Let $\textsc{Instability}$ be the following decision problem. An instance of $\textsc{Instability}$ is a 
    trading network with set of contracts $X$ and a set of agents $F$ (with choice functions that satisfy full substitutability and IRC) and an outcome $A \subseteq X$. The answer for an instance of $\textsc{Instability}$ is YES if the particular outcome $A$ is \emph{not} stable (that is, if there is a set of contracts $Z$ that blocks $A$),
    otherwise the answer is NO. 
    \begin{theorem}\label{grouphopeless}
    The $\textsc{Instability}$ problem is $\mathsf{NP}$-complete. 
    Moreover, if choice functions are represented by oracles, 
    then finding the right answer for an instance of $\textsc{Instability}$ might require an
    exponential number of oracle calls.
    \end{theorem}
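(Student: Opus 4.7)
The statement has three parts: membership in $\mathsf{NP}$, $\mathsf{NP}$-hardness via a polynomial-time reduction, and an oracle query lower bound. Membership is routine: a short certificate that $A$ is not stable is either an agent $f$ witnessing $C^f(A_f)\neq A_f$ or a blocking set $Z\subseteq X\setminus A$, the latter verified by a single evaluation $C^f(A_f\cup Z_f)$ for each $f\in F(Z)$, hence polynomial time under the standard assumption that choice functions are polynomial-time computable. The substantive work is $\mathsf{NP}$-hardness and the oracle bound.

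For $\mathsf{NP}$-hardness I would reduce from \textsc{Partition}: given positive integers $a_1,\ldots,a_n$ with $\sum_i a_i = 2S$, decide whether some $I\subseteq[n]$ has $\sum_{i\in I} a_i = S$. The plan is to build a trading network together with a distinguished outcome $A$ whose blocking sets correspond exactly to balanced subsets $I$. For each $i$ I would install a gadget representing a binary choice between an ``in-$I$'' and an ``out-of-$I$'' configuration; $A$ selects the ``out'' configuration in every switch, so a blocking set $Z$ encodes the subset $I\subseteq[n]$ of switches flipped to ``in''. Two central agents $\alpha$ and $\beta$ aggregate the switches and play the role of a two-sided budget gadget: $\alpha$'s choice function is designed so that the swap is $(A,\alpha)$-acceptable only when the aggregated weight of ``in''-switches is at most $S$, and $\beta$'s symmetric rule only when it is at least $S$; joint acceptability thus forces equality, which is precisely the partition condition. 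Peripheral agents are flow-based and hence automatically satisfy full substitutability and IRC.

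The main technical obstacle I anticipate is keeping $\alpha$ and $\beta$ \emph{individually} fully substitutable, since a naive ``maximise weight subject to ${\le}S$'' rule fails SSS on standard knapsack examples. Because \textsc{Partition} is only weakly $\mathsf{NP}$-complete, the construction is allowed to be polynomial in $n+S$; I would exploit this by splitting each weight-$a_i$ contract into $a_i$ unit copies and using pure top-$S$ (uniform-capacity) linear-preference choice rules at $\alpha$ and $\beta$. Integrity of each switch --- the $a_i$ unit copies of item $i$ must be treated as an atomic block --- can be enforced by routing them through a small flow-based intermediary subnetwork whose Kirchhoff balance rules out partial acceptance. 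With this unary encoding, SSS/CSC for every agent reduces to well-known linear-preference analyses, and correctness of the reduction follows from a direct bijection between blocking sets of $A$ and balanced partitions of $a_1,\ldots,a_n$.

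For the oracle lower bound I would use an adversary/indistinguishability argument. Fix $X$ and $A$ with $|X\setminus A|$ growing in the instance, and consider a family of instances parametrised by a hidden target $Z^*\subseteq X\setminus A$: for each agent $f$ let $C^f(Q)$ coincide with a trivial substitutable default on every query except $Q=A_f\cup Z^*_f$, on which $C^f$ accepts everything, so that $Z^*$ becomes a blocking set. The number of admissible targets is exponential in $|X\setminus A|$, so an algorithm making subexponentially many oracle calls can probe only a negligible fraction of them; the adversary can answer all its queries in a way consistent both with some hidden-$Z^*$ YES-instance and with the default-only NO-instance, establishing indistinguishability. The only delicate check is full substitutability and IRC of both oracle variants, which I would secure by choosing the family of hidden targets so that the per-agent supports of distinct targets are pairwise incompatible, making each local deviation an SSS/CSC-preserving perturbation of the default rule.
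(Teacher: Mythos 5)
Your high-level skeleton (membership, a \textsc{Partition} reduction, an adversary argument for the oracle bound) matches the paper's, but the hardness reduction as you designed it has a genuine gap, and it is exactly at the point you flagged as the main obstacle: atomicity of the $a_i$ unit copies. Kirchhoff balance cannot deliver it. A flow-based agent accepts \emph{any} offer in which its upstream and downstream contracts are balanced in number; it never enforces all-or-nothing acceptance of a parallel bundle of same-side contracts. Worse, such all-or-nothing behaviour is a same-side complementarity that full substitutability forbids outright: by SSS, a contract accepted out of a larger same-side offer must still be accepted out of any smaller offer containing it, so if the full balanced block encoding ``item $i$ is in $I$'' is accepted by every gadget agent, then so is every balanced sub-block that routes only $j<a_i$ of the unit copies. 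Hence a blocking set may pick up items fractionally, and with pure top-$S$ uniform-capacity rules at $\alpha$ and $\beta$ a blocking set exists whenever $\sum_i a_i\ge S$, i.e., always. Your reduction therefore decides the fractional relaxation of \textsc{Partition}, which is trivial, and establishes nothing.

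The paper's proof avoids this trap by refusing to encode weights in multiplicities at all: the network has just two firms $f,g$, one contract $y$ from $f$ to $g$, and one contract $x_i$ per item from $g$ to $f$, with the weights $a_i$ living inside the choice functions. Firm $f$ accepts every $x_i$ and accepts $y$ if and only if the offered weight is at least $s$ --- a cross-side threshold, which CSC permits; firm $g$ accepts $y$ and, as a seller, accepts the greedy index-order prefix of the offered $x_i$ with cumulative weight at most $s$ --- a priority rule, which satisfies SSS precisely where ``maximise weight subject to ${\le}\,s$'' (the failure you correctly identified) does not. Blocking $A=\emptyset$ then forces weight ${\ge}\,s$ at $f$ and ${\le}\,s$ at $g$, i.e., equality. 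Your oracle argument, by contrast, is essentially the paper's: the paper hides a target $X_I$ with $|I|=n$ among $2n$ unit contracts, takes the default ``accept $y$ iff $|X|\ge n+1$,'' and perturbs it only at $X=X_I$; since the exception sets have size exactly $n$ (an antichain strictly below the threshold), the acceptance region stays upward-closed, full substitutability and IRC survive, and distinguishing the default from a perturbed instance requires $\binom{2n}{n}$ queries. That is precisely the ``delicate check'' you deferred, and it is doable --- but note it needs this threshold-plus-antichain structure, not merely pairwise-incompatible supports; a single-point exception over a ``trivial default'' would itself violate SSS. The hardness half of your proposal needs to be reworked along the paper's lines.
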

    
    \begin{proof}[Proof of Theorem \ref{grouphopeless}]
    The $\textsc{Instability}$ problem clearly belongs to complexity class $\mathsf{NP}$, as verifying that a given set~$Z$ of contracts is a blocking set for an outcome~$A$ requires polynomial time.
    
    To show that $\textsc{Instability}$ is $\mathsf{NP}$-hard we reduce the $\mathsf{NP}$-complete \textsc{Partition} problem to
    $\textsc{Instability}$. An instance of the \textsc{Partition} problem is given by a $k$-tuple
    $\Pi=(a_1,a_2,\ldots,a_k)$ of positive integers such that $a_1\le a_2\le
    \ldots\le a_k$ holds. The answer to this problem is
    YES if and only if there is a subset $I$ of $\{1,2,\ldots,k\}$ such that
    $\sum_{i\in I}a_i=s$ where $2s=\sum_{i=1}^ka_i$. Given our instance~$\Pi$ of \textsc{Partition}, 
    we define an instance of \textsc{Instability} as follows. First, construct a trading network
    with firms $f$ and  $g$ and with contracts $y$ and $x_i$ such that $f=s(y)=b(x_i)$ and $g=b(y)=s(x_i)$ for
    $i\in\{1,2,\ldots,k\}$. Next, define choice function $C_\Pi^f$ with the help of
    $s=\frac 12\sum_{i=1}^ka_i$ by 
    \[
    C_\Pi^f(X|Y)=\left\{\begin{array}{rcl}
    (X|Y)&\qquad\qquad&\mbox{if }\sum\{a_i:x_i\in X\}\ge s,\\ 
    (X|\emptyset)&&\mbox{if }\sum\{a_i:x_i\in X\}< s.
    \end{array}
    \right.
    \]
    
    It is straightforward to check that $C_\Pi^f$ satisfies IRC, so let us check whether it is fully satisfiable as well. 
    First, since $f$ always accepts all upstream contracts, both SSS and CSC clearly hold for $f$ as a buyer (that is, conditions 1(a) and 2(a) are satisfied in Definition~\ref{def:full-sat}). 
    To check same-side substitutability as a seller for $f$, let us consider a fixed set $X$ of upstream contracts offered for $f$. Then $f$ either accepts all downstream contracts (in case $X$ is such that $\sum\{a_i:x_i\in X\}\ge s$) or it rejects every downstream contract (otherwise); in either case, the set of contracts rejected by $f$ as a seller satisfies SSS (that is, condition 1(b) in Definition~\ref{def:full-sat}). To check cross-side complementarity as a seller for $f$ (condition 2(b) in Definition~\ref{def:full-sat}), note that if $f$ rejects the contract $y$, then it must be the case that $\sum\{a_i:x_i\in X\} < s$ for the offered set $X$ of upstream contracts. But then $f$ will reject $y$ for any subset $X' \subseteq X$ too, so CSC follows as well. Hence, $C_\Pi^f$ is fully substitutable.
    
    Next, define $C_\Pi^g$ as follows:
    \small
    \[
    C_\Pi^g(Y|X)=\left\{\begin{array}{rl}
    (\emptyset|\emptyset)&\mbox{if } Y=\emptyset,\\
    (Y|X)&\mbox{if }Y=\{y\}\mbox{ and }\sum\{a_i:x_i\in X\}\le s,\\
        \! \! (Y|X\cap\{x_1,x_2,\ldots,x_t\})&\mbox{if }Y=\{y\}\mbox{ and $t$ is such that}\\
        &  \quad \sum\{a_i:x_i\in X, i\le t\}\le s< \sum\{a_i:x_i\in X, i \leq t+1\}.
    \end{array}
    \right. 
    \]
    \normalsize
    
    One can readily check that $C_\Pi^g$ also satisfies IRC. 
    To see that it is also fully substitutable, first notice that $g$ never rejects any upstream contracts, so it satisfies both SSS and CSC with $g$ as a buyer. To check the requirements for SSS with $g$ taking the role of a seller, let us fix a set $Y$ of upstream contracts. If $Y=\emptyset$, then $g$ rejects all downstream contracts. Otherwise (that is, if $Y=\{y\}$), suppose that $g$ rejects some $x_j$ from a set $X$ of offered downstream contracts. This means that there exists an index $t<j$ such that $\sum \{a_i : x_i \in X, i \leq t\} \leq s < \sum\{a_i:x_i\in X, i<t+1\}$. But then, for any superset $X' \supseteq X$ of dowstream contracts offered to $g$, the same condition will hold for some $t' \leq t<j$, and thus $x_j$ will again be rejected. This proves SSS with $g$ being a seller. To verify that $C_\Pi^g$ is also cross-side complementary with $g$ as a seller, it suffices to observe that any downstream contract rejected while $Y=\{y\}$ is offered to $g$ will get rejected again when $Y'=\emptyset$ is offered to $g$. Hence, we get that $C_\Pi^g$ is fully substitutable.
  
    So far, based on our instance $\Pi$ of \textsc{Partition}, we have
    determined a trading network. To finish the construction of our $\textsc{Instability}$ instance, we set an outcome
    $A=\emptyset$. We have to show that the answer to our instance of the \textsc{Partition} problem is
    YES if and only if $A=\emptyset$ is not stable.
    
    Assume now that the answer to our instance $\Pi$ of \textsc{Partition}
     is YES, that is, $\sum_{i\in I}a_i=s$ for some $I \subseteq \{1,2, \dots, k\}$. Define
    $X_I:=\{x_i:i\in I\}$ and $Y=\{y\}$. By the above definitions,
    $C_\Pi^f(X|Y)=(X|Y)$ and $C_\Pi^g(Y|X)=(Y|X)$, hence $X\cup Y$ blocks
    $A=\emptyset$, so $A$ is not stable.
    
    Assume now that $A=\emptyset$ is not stable. This means that there is
    a blocking set $Z$ to $A$ and define $I=\{i:x_i\in Z\}$, $X_I:=\{x_i:x_i\in
    Z\}$
    and $Y:=Z\cap \{y\}$. As $Z$ is blocking, we have $C_\Pi^f(X_I|Y)=(X_I|Y)$ and
    $C_\Pi^g(Y|X_I)=(Y|X_I)$. If $Y=\emptyset$ then
    $(Y|X_I)=C_\Pi^g(Y|X_I)=C_\Pi^g(\emptyset|X_I)=(\emptyset,\emptyset)$, so
    $Z=X_I\cup Y=\emptyset\cup\emptyset=\emptyset$, and hence $Z$ is not
    blocking. Otherwise, $Y=\{y\}$, and from $C_\Pi^g(Y|X_I)=(Y|X_I)$ we get that
    $\sum_{i\in I}a_i\le s$. Moreover, from $y\in C_\Pi^f(X_I,Y)$ we get that
    $\sum_{i\in I}a_i\ge s$. Consequently $\sum_{i\in I}a_i= s$, and the answer
    to the \textsc{Partition} problem is YES.

    To prove the second part of the theorem, define a trading network
    with firms $f$ and  $g$ and with contracts $y$ and $x_i$ such that $f=s(y)=b(x_i)$ and $g=b(y)=s(x_i)$ for for $1\le i\le 2n$. Define the following choice function 
    \begin{equation}\label{cf}
    C_0^f(X|Y)=\left\{\begin{array}{rcl}
    (X|Y)&\qquad\qquad&\mbox{if }|X|\ge n+1\\ 
    (X|\emptyset)&&\mbox{if }|X|\le n
    \end{array}
    \right.
    \end{equation}
    For $I\subseteq \{1,2,\ldots,n\}$ define $X_I:=\{x_i:i\in I\}$. For $|I|=n$ let
    \[C_I^f(X|Y)=\left\{\begin{array}{rcl}
    (X|Y)&\qquad\qquad&\mbox{if }|X|\ge n+1\mbox{ or if }X=X_I\\ 
    (X|\emptyset)&&\mbox{if }|X|\le n\mbox{ and }X\ne X_I
    \end{array}
    \right.
    \]
    It is straightforward to check that choice functions $C^f_0$ and $C^f_I$
    above satisfy the full substitutability condition and IRC.
    Define the following choice function for $g$
    \begin{equation}\label{cg}
    C^g(Y|X)=\left\{\begin{array}{rcl}
    (\emptyset|\emptyset)&\qquad&\mbox{if } Y=\emptyset\\
    (Y|X)&&\mbox{if }Y=\{y\}\mbox{ and }|X|\le n\\
    (Y|X\cap\{x_1,x_2,\ldots,x_t\})&&\mbox{if }Y=\{y\}\mbox{ and
    }|\{x_i\in X: i\le t\}|= n
    \end{array}
    \right.
    \end{equation}
    As $C^g=C^g_\Pi$ for $\Pi=(1,1,\ldots,1)$, $C^g$ also satisfies the full substitutability condition and IRC.
    
    Now assume that an instance of $\textsc{Instability}$ is given by the above network
    and an outcome $A=\emptyset$. Assume that the choice functions are not given
    explicitly, but by value-returning oracles. Moreover, we know exactly that
    the choice function of $g$ is the one defined in \eqref{cg} and we know
    that the choice function of $f$ is either $C^f_0$ or $C^f_I$ for some $I$. 
    It is easy to check that $A$ is not stable if and only if $C^f=C^f_I$ and
    in this case the only blocking set is $Z=X_i\cup\{y\}$. So if one has to
    decide stability of $A$, then one must determine the $C^f(Z)$ values
    for all such possible $Z$, and this means $\binom{2n}n$ oracle calls.
    \end{proof}
    
    \section{Open question}\label{sec.open}
    \citet*{FlJaTaTe:16} introduce another solution concept, called \textit{weak trail stability}. Let us consider a trail $T=\{x_{1},...,x_{M}\}$ whose elements are arranged in a sequence $(x_{1},...,x_{M})$ and define $T_{f}^{\le m}=\{x_{1},...,x_{m}\}\cap T_{f}$ to be firm $f$'s contracts out of the first $m$ contracts in the trail
and $T_{f}^{\ge m}=\{x_{m},...,x_{M}\}\cap T_{f}$ to be firm $f$'s contracts out of the last $M-m+1$
contracts in the trail (where $m \in \{1,\ldots,M\}$). 

\begin{definition}\label{trailstable} An outcome $A\subseteq X$ is \textit{weakly trail-stable} if 
\begin{enumerate}
\item $A$ is acceptable.
\item There is no trail $T=\{x_1,x_2,\ldots,x_M\}$, such that $T\cap
  A=\emptyset$  and 
\begin{enumerate}
\item $\{x_1\}$ is $(A,f_1)$-acceptable for $f_1=s(x_{1})$ and
\item At least one of the following two options holds:
\begin{enumerate}
\item   $T_{f_{m}} ^{\le m}$ is 
$(A,f_m)$-acceptable for $f_m=b(x_{m-1})=s(x_{m})$ whenever $1<m\le M$, or
\item $ T_{f_{m}}^{\ge m-1}$ is
$(A,f_m)$-acceptable for $f_m=b(x_{m-1})=s(x_{m})$ whenever $1<m\le M$
\end{enumerate}
\item $\{x_M\}$ is $(A,f_{M+1})$-acceptable for $f_{M+1}=b(x_{M})$.
\end{enumerate}
The above trail $T$ is called a \textit{sequentially blocking trail to $A$}.

\end{enumerate}
\end{definition}
    
    Like trail stability, weak trail stability also has an intuitive economic interpretation. But note that as the sequentially blocking trail grows, we ensure that each intermediate
    agent wants to choose \textit{all} his contracts along the sequentially blocking trail. We did not impose this requirement for trail stability. As \citet*{FlJaTaTe:16} argue weak trail stability might be a good solution concept in markets where contracts are not executed quickly. \citet*{FlJaTaTe:16} also show that weak trail stability is a weaker solution concept than trail stability in trading networks under full substitutability (though not in general). However, unlike trail-stable outcomes, the set of weakly trail-stable outcomes does not appear to have a structure that allows for efficient computation. 
    
    \begin{conjecture}
    Suppose that in a trading network choice functions satisfy full substitutability and IRC. Then it is $\mathsf{NP}$-complete to decide whether there exists a weakly trail-stable outcome that is not trail-stable.
    \end{conjecture}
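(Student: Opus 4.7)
The plan is to establish both NP-containment and NP-hardness, with NP-containment being the subtler component.

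For NP-hardness, I would adapt the reduction strategy underlying Theorem~\ref{thm:path+cycle}. The idea is to construct a trading network---built from node gadgets with flow-based or slightly enriched choice functions---so that the existence of a weakly trail-stable outcome which is \emph{not} trail-stable encodes an NP-complete combinatorial property of an auxiliary graph (a partition into two acyclic sets, a satisfying assignment, or similar). The design principle to exploit is that a sequentially blocking trail imposes strictly stronger acceptance conditions on intermediate firms than a locally blocking trail, because each intermediate firm must accept either the entire prefix or the entire suffix of its contracts in the trail. One would tailor the choice functions so that a specific locally blocking trail witnessing non--trail-stability is easy to produce in YES instances, while any candidate sequentially blocking trail is forced to violate a constraint tied precisely to the hard combinatorial problem.

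For NP-containment, the natural polynomial-size certificate is a pair $(A,T)$ where $A$ is an outcome and $T$ is a locally blocking trail to $A$. Checking that $T$ is locally blocking and that $A$ is acceptable is routine, so the real task is verifying in polynomial time that $A$ admits no sequentially blocking trail. Since such trails may repeat firms and be long in principle, the plan is to establish a structural lemma---leveraging full substitutability and IRC---showing that any sequentially blocking trail can be shortened by short-circuiting repeated firms to one of polynomially bounded length. With such a bound in hand, an explicit search via dynamic programming over states of the form (current firm, current prefix-or-suffix acceptance mode) would decide existence; an alternative route is to encode sequentially blocking trails as walks in a suitably augmented directed auxiliary graph and test reachability.

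The main obstacle will be the NP-containment step. Unlike locally blocking trails, whose acceptance condition is purely local and amenable to the DFS-style reasoning used in the proof of Theorem~\ref{thm:path+cycle}, sequentially blocking trails couple each intermediate firm's acceptance decision to all of its contracts on one entire side of its position in the trail, creating global dependencies that resist local search. Establishing the polynomial length bound---or otherwise reducing the nonexistence of sequentially blocking trails to a polynomial-time decidable property---is the crux of the conjecture; without such a lemma the problem could a priori lie higher in the polynomial hierarchy. The conjecture's very formulation as NP-complete signals that such a structural bound is expected to hold, and producing it would simultaneously settle the conjecture and clarify the combinatorial status of weak trail stability.
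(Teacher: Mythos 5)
There is no proof in the paper to compare against: the statement is posed explicitly as a conjecture in Section~\ref{sec.open}, and the authors leave it open. Your proposal does not close it either --- by your own account it is a plan, and both halves remain unestablished. On the hardness side you describe only a design principle (gadgets in which locally blocking trails are easy to produce but sequentially blocking trails are forced to fail), with no concrete construction, no choice functions, and no correctness argument; the difficulty is precisely that the two notions you must separate differ only in the acceptance conditions imposed on intermediate firms, and any gadget exploiting that gap must still have fully substitutable, IRC choice functions, which is exactly the part you do not exhibit.

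The deeper gap, which you correctly flag but do not resolve, is NP-membership. A certificate $(A,T)$ with $T$ a locally blocking trail witnesses that $A$ is not trail-stable, but weak trail stability of $A$ is a universal statement --- nonexistence of any sequentially blocking trail --- so the problem a priori lies in $\Sigma_2^p$, not in $\mathsf{NP}$. Everything hinges on your proposed ``shortening lemma,'' for which you give no argument, and which is not obviously true: sequentially blocking trails may revisit a firm many times, and the conditions in Definition~\ref{trailstable} require acceptability of the entire prefix set $T_{f_m}^{\le m}$ or suffix set $T_{f_m}^{\ge m-1}$, so short-circuiting a repeated firm changes these sets for \emph{every} firm appearing between the spliced occurrences and can destroy their acceptability. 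For the same reason your dynamic program is underspecified: a state of the form (current firm, prefix-or-suffix mode) does not determine the sets $T_{f}^{\le m}$ on which acceptability depends, so the state space would have to record exponentially much information. In short, you have restated the conjecture together with an accurate diagnosis of why it is hard, which is a sensible research assessment but not a proof.
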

    
    \section{Conclusion}\label{sec.conclusion}
    
    We showed that stable outcomes, which are immune to deviation by groups of agents and which can be efficiently computed in two-sided matching markets or in supply chains, cannot be efficiently verified or computed in trading networks. Our main result is particularly strong: even in flow networks, outcomes that are immune to blocks in which even two agents can coordinate on an upstream and downstream contract cannot be computed efficiently. We suggest an alternative solution concept---trail stability---which, under full substitutability, always exists, has an intuitive economic interpretation, coincides with competitive equilibrium in model with prices, and can be efficiently found in general trading networks. Further work can examine the computation properties of similar stability notions in trading networks where agents have more complex preferences \citep{jagadeesan2017complementary}.

    \newpage
    \singlespacing
    \bibliographystyle{chicago}
    \bibliography{bib}
    
    \end{document}